      \theoremstyle{plain}
      \newtheorem{assumption}{Assumption}
\newtheorem{theorem}{Theorem}
\newtheorem{remark}{Remark}
\newtheorem{lemma}{Lemma}
\newtheorem{definition}{Definition}
\newtheorem{proposition}{Proposition}
\begin{document}
\graphicspath{{figs/}}

%
\title{The Effect of Population Flow on Epidemic Spread: Analysis and Control}

%
%
%

\author{Brooks Butler, Ciyuan Zhang, Ian Walter, Nishant Nair, Raphael Stern,
        and~Philip E. Par\'e*
\thanks{*Brooks Butler, Ciyuan Zhang, Ian Walter, Nishant Nair,
 and~Philip. E. Par\'e are with the School of Electrical and Computer Engineering at Purdue University. Emails: \{brooksbutler, zhan3375, walteri, nair65, philpare\}@purdue.edu. Raphael Stern is with the Department of Civil, Environmental, and Geo- Engineering at the University of Minnesota, Email: \{rstern@umn.edu\}.
 This work was funded in part by the C3.ai Digital Transformation Institute sponsored by C3.ai Inc. and the Microsoft Corporation and in part 
   by the National Science Foundation, grants NSF-CNS \#2028738 (P.E.P.), NSF-CNS \#2028946 (R.S.), and NSF-ECCS \#2032258 (P.E.P).}}

\maketitle

\begin{abstract}
In this paper, we present a discrete-time networked SEIR model using population flow, 
its derivation, and assumptions under which this model is well defined. We identify properties of the system's equilibria, namely the healthy states. We show that the 
set of healthy states is
asymptotically stable, and that the value of the equilibria
becomes equal across all sub-populations as a result of the network flow model. Furthermore, we explore closed-loop feedback control of the system by limiting flow between sub-populations as a function of the current infected states. These results are illustrated via simulation based on flight traffic between major airports in the United States. 
We find that a flow restriction strategy combined with a vaccine roll-out significantly reduces the total number of infections over the course of an epidemic, given that the initial flow restriction response is not delayed.
\end{abstract}

\IEEEpeerreviewmaketitle

\section{Introduction}
Global interconnectivity has proven to be a key factor in the propagation of infectious diseases~\cite{ruan2006effect, tatem2006global}. Most recently, we have seen evidence of such connectivity through the rapid spread of the COVID-19 pandemic, which propagated 
from its origin in Wuhan, China to every major population center globally in a matter of weeks \cite{rodriguez2020going}. Given the implications of global population flow on disease spread, it becomes important to accurately model 
this
flow, as reliable modeling is an essential step to developing effective and efficient mitigation strategies.
Various infection models have been proposed based on characteristics of individual pathogens and studied in the literature, including susceptible-infected-susceptible (SIS), susceptible-infected-removed (SIR), and susceptible-infected-removed-susceptible (SIRS)~\cite{rock2014dynamics, mei2017dynamics}. For this paper, we consider the recent COVID-19 pandemic as a motivating case for the model selection and construction. Due to the delay in onset of COVID-19 symptoms~\cite{backer2020incubation, guan2020clinical, li2020early, lauer2020incubation} and large asymptomatic populations estimated between $17-81\%$~\cite{byambasuren2020estimating, chang2020time, mizumoto2020estimating, ing2020covid}, we choose the susceptible-exposed-infected-removed (SEIR) model as the foundation of our model development.

Previous work involving the incorporation of population flows in epidemic process models include analysis of a networked SIS model with flows \cite{ifac_flow} as well as using a networked SIR model with flows to predict arrival times for various epidemics using global flight data \cite{brockmann2013hidden}, where both models are developed in continuous time. This paper uses similar derivation techniques to define our discrete-time epidemic model. 
However, we contribute to the development of such models by including the exposed state in our model formulation, as well as provide analysis of the discrete time dynamics. While other work has considered capturing the effect of transportation on the spread of COVID-19 using the SEIR model \cite{vrabac2021capturing}, the key distinction in this work is that infection propagation over the network is modeled by the relocation of infected individuals to other sub-populations rather than assuming direct contact and intermingling between sub-populations. Furthermore, while previous work assumes the likelihood of individuals traveling is independent of their infection state, this work considers the effect of being infected on the probability of an individual traveling.

\subsection{Contributions}
In this paper, we explore a discrete-time networked SEIR model to analyze the effect of population flows on the propagation of an infectious disease. We summarize the contributions of this paper as follows:

\begin{enumerate}
    \item We derive a discrete-time networked SEIR model incorporating population flows, including the conditional probabilities of a given state affecting the likelihood of an individual to be traveling.
    \item We present 
    assumptions
    on the parameters for the model to be
    well defined given proper initial conditions.
    \item We prove model convergence and show that the equilibria (i.e., the healthy states) are homogeneous, that is, the same value for all sub-populations.
    \item We present a feedback control law that restricts system flow globally using the infection states, show that under the control effort the model is still well defined, and illustrate its 
    behavior
    via simulation.
\end{enumerate}

In the remainder of this paper, we present our model derivation and necessary assumptions in Section \ref{sec:model}, provide analysis of the healthy states in Section \ref{sec:analysis}, propose a feedback control in Section \ref{sec:control}, and illustrate the model and the effectiveness of the controller through simulation in Section \ref{sec:simulations}. Finally, we provide our conclusions in Section~\ref{sec:conclusion}.

\subsection{Notation}
We denote the set of real numbers, positive real numbers, non-negative integers, and the positive integers as $\mathbb{R}$, $\mathbb{R}_{>0}$, $\mathbb{Z}_{\geq 0}$, and $\mathbb{Z}_{\geq 1}$, respectively. For any positive integer $n$, we have $[n] =\{1,2,...,n \}$. 
A diagonal matrix is denoted as diag$(\cdot)$. The transpose of a vector $x\in \mathbb{R}^n$ is $x^\top$. We use $\mathbf{0}$ and $\mathbf{1}$ to denote the vectors whose entries all equal 0 and 1, respectively. We let $\mathcal{G} =(\mathbb{V},\mathbb{E},\mathbb{W})$ denote a weighted directed graph 
where $\mathbb{V} = \{ v_1, v_2,..., v_n\}$ is the set of nodes, $\mathbb{E} \subseteq \mathbb{V}\times \mathbb{V}$ is the set of edges, and 
$\mathbb{W}:\mathbb{E}\rightarrow \mathbb{R}_{>0}$
maps to 
the real valued edge weights on each edge. We denote the configuration of edges in a directed graph at time step $k$ as $\mathcal{G}^k=(\mathbb{V},\mathbb{E}^k,\mathbb{W})$, where 
$\mathbb{E}^k$ denotes the set of edges at time step $k$. 
Furthermore, we denote $\cup_{k \geq 0}\mathbb{E}^k$ as the union of all non-zero edge configurations on a graph for all $k \geq 0$. We define a graph $\mathcal{G}$ as being strongly connected if there is a path from every node to every other
node in the graph.

\section{SEIR Model with Network Flows} \label{sec:model}
In this section, we present a networked SEIR model incorporating the population flow 
of
individuals between sub-populations. First, consider a group of $n$ sub-populations in a graph, where each sub-population $i\in [n]$ is represented by a node
in the graph. We use the SEIR model to describe how suceptible individuals in sub-population $i$ become exposed, infected, and eventually recover 
as the result
of an infectious disease \cite{shu2012global}. 
We begin with defining the SEIR model behavior without graph connections 
for each sub-population $i\in [n]$ where $S_i,E_i,I_i$, and $R_i$ represent the number of susceptible, exposed, infected, and recovered individuals in sub-population $i$, respectively, 
\begin{subequations}
\label{eq:seir_standard}
\begin{align}
        \dot{S}_i(t) &= -\beta_i(t) \frac{I_i(t)}{N_i} S_i(t) \\
        \dot{E}_i(t) &= \beta_i(t) \frac{I_i(t)}{N_i} S_i(t) - \sigma_i(t) E_i(t) \\
        \dot{I}_i(t) &= \sigma_i(t) E_i(t) - \delta_i(t) I_i(t) \\
        \dot{R}_i(t) &= \delta_i(t) I_i(t),
\end{align}
\end{subequations}
\noindent where $\beta_i$ is the infection rate, $\sigma_i$ is transition rate from exposed to infected, and $\delta_i$ is the healing rate.
We assume $S_i(t)+E_i(t)+I_i(t)+R_i(t)=N_i$ for all $t$, i.e., 
a fixed population size 
for each sub-population. We assume fixed sub-populations as the intended time scale of the model is such that population change due to birth/death rates and migration is negligible (e.g. rapid disease propagation over weeks or months). While all the variables (and model parameters, except population), will (may) continue to vary with time, we remove the time-dependence notation for convenience and ease of reading from this point forward.

To account for the flow of individuals between sub-populations we 
we expand the model
in~\eqref{eq:seir_standard}:
\begin{subequations}
\label{eq:seir_flows_indv}
\begin{align}
    \dot{S}_i &= -\beta_i \frac{I_i}{N_i} S_i +\sum_{j \neq i} \left(F_{ij}\frac{S_j}{N_j}-F_{ji}\frac{S_i}{N_i} \right) \\
    \dot{E}_i &= \beta_i \frac{I_i}{N_i} S_i - \sigma_iE_i+ \sum_{j \neq i} \left(F_{ij}\frac{E_j}{N_j}-F_{ji}\frac{E_i}{N_i} \right) \\
    \dot{I}_i &= \sigma_iE_i - \delta_i I_i+ \sum_{j \neq i} \left(F_{ij}\frac{I_j}{N_j}-F_{ji}\frac{I_i}{N_i} \right) \\        
    \dot{R}_i &= \delta_i I_i + \sum_{j \neq i} \left(F_{ij}\frac{R_j}{N_j}-F_{ji}\frac{R_i}{N_i} \right),
\end{align}
\end{subequations}
%
\noindent where $F_{ij}$ represents the number of individuals flowing from sub-population $j$  to $i$, where $F_{ii}=0$. By making a substitution of variables where $s_i = S_i/N_i, e_i = E_i/N_i, x_i = I_i/N_i, r_i = R_i/N_i$ we can model the proportion of individuals 
as follows
\begin{subequations}
\label{eq:seir_flows_proportion}
\begin{align}
    \dot{s}_i &= -\beta_i x_i s_i +\frac{1}{N_i}\sum_{j \neq i} \left(F_{ij}s_j-F_{ji}s_i \right) \\
    \dot{e}_i &= \beta_i x_i s_i - \sigma_i e_i+ \frac{1}{N_i}\sum_{j \neq i} \left(F_{ij}e_j-F_{ji}e_i \right) \\
    \dot{x}_i &= \sigma_i e_i - \delta_i x_i+ \frac{1}{N_i}\sum_{j \neq i} \left(F_{ij}x_j-F_{ji}x_i \right) \\        
    \dot{r}_i &= \delta_i x_i + \frac{1}{N_i}\sum_{j \neq i} \left(F_{ij}r_j-F_{ji}r_i \right), 
\end{align}
\end{subequations}

\noindent where $s_i+e_i+x_i+r_i=1$. Note that both \eqref{eq:seir_flows_indv} and \eqref{eq:seir_flows_proportion} 
assume
the sub-populations are well mixed and that the likelihood of an individual traveling 
is independent of their infectious state, that is, whether they are
susceptible, exposed, infected, or recovered. 
We now extend our model to include the probability that an individual is traveling, given their infectious state.
\begin{subequations}
\label{eq:seir_p(q|T)}
\begin{align}
    \dot{s}_i &= -\beta_i x_i s_i +\frac{1}{N_i}\sum_{j \neq i} \left(F_{ij}P(s_j|T_j)-F_{ji}P(s_i|T_i) \right) \\
    \dot{e}_i &= \beta_i x_i s_i - \sigma_i e_i+ \frac{1}{N_i}\sum_{j \neq i} \left(F_{ij}P(e_j|T_j)-F_{ji}P(e_i|T_i) \right) \\
    \dot{x}_i &= \sigma_i e_i - \delta_i x_i+ \frac{1}{N_i}\sum_{j \neq i} \left(F_{ij}P(x_j|T_j)-F_{ji}P(x_i|T_i) \right) \\        
    \dot{r}_i &= \delta_i x_i + \frac{1}{N_i}\sum_{j \neq i} \left(F_{ij}P(r_j|T_j)-F_{ji}P(r_i|T_i) \right),
\end{align}
\end{subequations}
\noindent where $P(q_i|T_i), \, q_i \in \{s_i,e_i,x_i,r_i \}$ is the probability of an individual at sub-population $i$ being in a certain infectious state given that they are also traveling. Note that
\begin{equation} \label{eq:probs_sum_to_1}
    P(s_i|T_i)+P(e_i|T_i)+P(x_i|T_i)+P(r_i|T_i) = 1
\end{equation}
\noindent for all $i \in [n]$. We can compute the probability that an individual from sub-population $i$ is traveling given that they are in state~$q_i$ 
using Bayes' rule,
\begin{equation}
\label{eq:bayes}
    P(q_i|T_i) = \frac{P(T_i|q_i)P(q_i)}{P(T_i)},
\end{equation}
\noindent where  
$P(T_i)$ is the probability of an individual traveling from sub-population $i$ 
and $P(q_i) = q_i$ is the probability they are in state $q_i$.
We compute the probability of an individual traveling from a given sub-population $i\in [n]$ as 
\begin{equation}
\label{eq:gamma}
    P(T_i) = \gamma_i = \frac{\sum_{j \neq i}F_{ji}}{N_i},
\end{equation}

\noindent where $\gamma_i$ is the proportion of the population flowing out of sub-population $i$ and $\gamma_i \in [0,1]$ as it is reasonable to assume that $\sum_{j \neq i}F_{ji} \ll N_i $ (as $\sum_{j \neq i}F_{ji} = N_i$ would imply that the entire population is traveling at a given time). Since measuring $P(q_i|T_i)$ is practically challenging, we instead parameterize $p_i^{q} := P(T_i|q_i)$, for $q_i \in \{s_i,e_i,x_i,r_i \}$, as follows. Using \eqref{eq:probs_sum_to_1}-\eqref{eq:gamma} we have that
\begin{align}
    \frac{1}{\gamma_i}(p_i^{s}s_i+p_i^{e}e_i+p_i^{x}x_i+p_i^{r}r_i) &= 1. 
\end{align}   
Therefore, assuming that $p_i^{s}= p_i^{e} = p_i^{r} = p_i^T$, we have 
\begin{align*}    
    p_i^T(s_i+e_i+r_i)+p_i^{x}x_i &= \gamma_i.
\end{align*}
\noindent Thus, solving for $p_i^T$ yields
\begin{equation}
\label{eq:p_i^T}
     p_i^T = \frac{\gamma_i -p_i^{x}x_i }{s_i+e_i+r_i},
\end{equation}
%
\noindent 
which allows us to use $p_i^x \in [0,1]$ as a parameter to describe how likely an individual will be traveling given that they are infected. Furthermore, we can compute the number of individuals flowing from sub-population $j$ to $i$ as  
\begin{equation} \label{eq:F_ij}
    F_{ij} = \gamma_j w_{ij} N_j,
\end{equation}
where $w_{ij}$ is the proportion of traveling individuals flowing from sub-population $j$ to $i$ computed as
\begin{equation}\label{eq:normalized}
    w_{ij} = \frac{F_{ij}}{\sum_{l \neq j}F_{lj}}
\end{equation}
with $w_{ii} = 0$. 
Thus, we can derive the dynamics for the susceptible proportion at sub-population $i$ as
\begin{align*}
    \dot{s}_i &= -\beta_i x_i s_i +\frac{1}{N_i}\sum_{j \neq i} \left(F_{ij}P(s_j|T_j)-F_{ji}P(s_i|T_i) \right) \\
     &= -\beta_i x_i s_i +\frac{1}{N_i}\sum_{j \neq i} \left(\gamma_j w_{ij} N_j\frac{p_j^T s_j}{\gamma_j}-\gamma_i w_{ji} N_i\frac{p_i^T s_i}{\gamma_i} \right) \\
     &= -\beta_i x_i s_i +\sum_{j \neq i} \left(\frac{N_j}{N_i} w_{ij}p_j^T s_j- w_{ji} p_i^T s_i \right).
\end{align*}
\noindent 
Using the fact that $\sum_{j \neq i}w_{ji} = 1$, by \eqref{eq:normalized}, we have that
\begin{equation*}
     \dot{s}_i = -(\beta_i x_i+p_i^T) s_i +\sum_{j \neq i} \frac{N_j}{N_i} w_{ij}p_j^T s_j.
\end{equation*}
\noindent By similar derivations, we can rewrite \eqref{eq:seir_p(q|T)} as
\begin{subequations}
\label{eq:flows_ind_node_cont}
\begin{align}
     \dot{s}_i &= -(\beta_i x_i+p_i^T) s_i +\sum_{j \neq i} \frac{N_j}{N_i} w_{ij}p_j^T s_j \\
     \dot{e}_i &= \beta_i x_i s_i - (\sigma_i+p_i^T)e_i + \sum_{j \neq i} \frac{N_j}{N_i} w_{ij}p_j^T e_j \\
     \dot{x}_i &= \sigma_i e_i - (\delta_i+p_i^{x})x_i + \sum_{j \neq i} \frac{N_j}{N_i} w_{ij}p_j^{x} x_j \\
     \dot{r}_i &= \delta_i x_i -p_i^{T}r_i +\sum_{j \neq i} \frac{N_j}{N_i} w_{ij}p_j^{T} r_j.
\end{align}
\end{subequations}
%
\noindent  We choose to discretize our model due to the nature of the collected data on the spread of pandemics, where the highest resolution data is typically recorded once per day. Using Euler's method, we can write \eqref{eq:flows_ind_node_cont} in discrete time as

\small
\begin{subequations} 
\label{eq:flows_ind_node_disc}
\begin{align}
     s_i^{k+1} &= s_i^k + h \left( -(\beta_i^k x_i^k+p_i^{T,k}) s_i^k +\sum_{j \neq i} \frac{N_j}{N_i} w_{ij}^k p_j^{T,k} s_j^k \right) \label{eq:s_flows_disc} \\
     e_i^{k+1} &= e_i^k + h \left(\beta_i^k x_i^k s_i^k - (\sigma_i^k+p_i^{T,k})e_i^k + \sum_{j \neq i} \frac{N_j}{N_i} w_{ij}^k p_j^{T,k} e_j^k \right) \\
     x_i^{k+1} &= x_i^k + h \left(\sigma_i^k e_i^k - (\delta_i^k+p_i^{x,k})x_i^k + \sum_{j \neq i} \frac{N_j}{N_i} w_{ij}^k p_j^{x,k} x_j^k \right) \\
     r_i^{k+1} &= r_i^k + h \left( \delta_i^k x_i^k -p_i^{T,k}r_i^k +\sum_{j \neq i} \frac{N_j}{N_i} w_{ij}^k p_j^{T,k} r_j^k \right),
\end{align}
\end{subequations}
\normalsize

\noindent where $k \in \mathbb{Z}_{\geq 0}$ is a given time step and $h \in \mathbb{R}_{>0}$ is a sampling parameter, yielding our discrete time model.

For the model in \eqref{eq:flows_ind_node_disc} to be well-defined we require the following assumptions.
\begin{assumption}
\label{assume:equal_flows}
Let $\sum_{i \neq j} F_{ij}^k = \sum_{i \neq j}F_{ji}^k$ for all $i \in [n]$ and $k \in \mathbb{Z}_{\geq 0}$.
\end{assumption}
\noindent This assumption requires that the total flow of individuals into a given sub-population must be equal to the total flow out. 
Furthermore, we impose the following assumption on the model parameters.

\begin{assumption}
\label{assume:parameters}
For all $i\in [n]$ and $k \in \mathbb{Z}_{\geq 0}$, let $\beta_i^k,\delta_i^k,\sigma_i^k \in \mathbb{R}_{>0}$, $h\beta_i^k, h\delta_i^k, h \sigma_i^k \in (0,1]$, $h(\beta_i^k+p^{T,k}_i) \leq 1, h(\sigma_i^k+p^{T,k}_i) \leq 1,$ and $h(\delta_i^k+p^{T,k}_i) \leq 1$. 
\end{assumption}

Under these assumptions, we can show that given proper initial conditions the model will always remain well defined.

\begin{lemma}
\label{lem:well_defined}
Consider the model in \eqref{eq:flows_ind_node_disc} under Assumptions \ref{assume:equal_flows}-\ref{assume:parameters}. Suppose $s_i^0,e_i^0,x_i^0,r_i^0 \in [0,1]$ and $s_i^0+e_i^0+x_i^0+r_i^0=1$ for all $i\in [n]$. Then, for all $k \geq 0$ and $i\in [n]$, $s_i^k,e_i^k,x_i^k,r_i^k \in [0,1]$ and $s_i^k+e_i^k+x_i^k+r_i^k=1$.
\end{lemma}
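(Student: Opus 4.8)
The plan is to argue by induction on the time step $k$, establishing the two claimed properties jointly: non-negativity of all four states together with the invariant $s_i^k+e_i^k+x_i^k+r_i^k=1$. The two must be carried together because each feeds the other in the inductive step (the sum identity lets me conclude $x_i^k\le1$, which is in turn needed to bound a decay coefficient in the non-negativity argument). The base case $k=0$ is exactly the hypothesis. For the inductive step I assume that at step $k$ every sub-population satisfies $s_i^k,e_i^k,x_i^k,r_i^k\in[0,1]$ with unit sum, and I prove the same at $k+1$.

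First I would verify the conservation identity $s_i^{k+1}+e_i^{k+1}+x_i^{k+1}+r_i^{k+1}=1$. Adding the four update equations in \eqref{eq:flows_ind_node_disc}, the internal transfer terms cancel in pairs: $+\beta_i^k x_i^k s_i^k$ against $-\beta_i^k x_i^k s_i^k$, $+\sigma_i^k e_i^k$ against $-\sigma_i^k e_i^k$, and $+\delta_i^k x_i^k$ against $-\delta_i^k x_i^k$, leaving only the net flow. Using the parameterization \eqref{eq:p_i^T} rewritten as $p_i^{T,k}(s_i^k+e_i^k+r_i^k)+p_i^{x,k}x_i^k=\gamma_i^k$, the total outflow from node $i$ collapses to $-\gamma_i^k$; by the same identity together with $F_{ij}^k=\gamma_j^k w_{ij}^k N_j$ from \eqref{eq:F_ij}, the total inflow collapses to $\tfrac{1}{N_i}\sum_{j\neq i}F_{ij}^k$. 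Since $\gamma_i^k=\tfrac{1}{N_i}\sum_{j\neq i}F_{ji}^k$ by \eqref{eq:gamma}, the net flow equals $\tfrac{1}{N_i}\big(\sum_{j\neq i}F_{ij}^k-\sum_{j\neq i}F_{ji}^k\big)$, which vanishes by Assumption~\ref{assume:equal_flows}. Hence the sum is preserved and equals one by the inductive hypothesis.

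Next I would prove non-negativity. Each update reads as a diagonal decay term plus manifestly non-negative contributions; for instance $s_i^{k+1}=\big(1-h(\beta_i^k x_i^k+p_i^{T,k})\big)s_i^k+h\sum_{j\neq i}\tfrac{N_j}{N_i}w_{ij}^k p_j^{T,k}s_j^k$, and analogously for $e,x,r$ with the extra non-negative source terms $h\beta_i^k x_i^k s_i^k$, $h\sigma_i^k e_i^k$, and $h\delta_i^k x_i^k$. The inflow sums are non-negative because $N_j/N_i>0$ and the weights $w_{ij}^k$, the probabilities $p_j^{T,k},p_j^{x,k}$, and the states are all non-negative. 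It then remains to check each decay coefficient is non-negative: using $x_i^k\le1$ from the inductive hypothesis, $h(\beta_i^k x_i^k+p_i^{T,k})\le h(\beta_i^k+p_i^{T,k})\le1$, while the coefficients for $e$, $x$, $r$ are controlled by $h(\sigma_i^k+p_i^{T,k})\le1$, $h(\delta_i^k+p_i^{x,k})\le1$, and $hp_i^{T,k}\le h(\delta_i^k+p_i^{T,k})\le1$, all supplied by Assumption~\ref{assume:parameters}. Thus every state at $k+1$ is non-negative, and combining this with the already-established unit sum forces each state into $[0,1]$ (each equals one minus a sum of non-negative terms), closing the induction.

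I expect the main obstacle to be the conservation computation rather than the bounding: the cancellation of the flow terms is not immediate and hinges on recognizing that the derived parameterization \eqref{eq:p_i^T} is precisely what makes the weighted outflow and inflow reduce to $\gamma_i^k$ and the raw flows $F_{ij}^k$, after which Assumption~\ref{assume:equal_flows} delivers the balance. A secondary point worth flagging is that the decay coefficient in the $x$-update involves $p_i^{x,k}$, whereas Assumption~\ref{assume:parameters} is stated in terms of $p_i^{T,k}$; the argument requires $h(\delta_i^k+p_i^{x,k})\le1$, so one must either read that bound as part of the parameter assumption or invoke $p_i^{x,k}\le p_i^{T,k}$.
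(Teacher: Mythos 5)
Your proof is correct and takes essentially the same route as the paper's: induction carrying non-negativity and the unit-sum invariant jointly, with non-negativity from the decay-coefficient form of each update under Assumption~\ref{assume:parameters}, and conservation from cancellation of the internal transfer terms plus flow balance under Assumption~\ref{assume:equal_flows} (the paper phrases the flow cancellation via the conditional probabilities summing to one, \eqref{eq:probs_sum_to_1}, rather than the parameterization \eqref{eq:p_i^T}, but this is the same computation). The subtlety you flag is real and worth noting: the $x$-update's decay coefficient is $h(\delta_i^k+p_i^{x,k})$ while Assumption~\ref{assume:parameters} only bounds $h(\delta_i^k+p_i^{T,k})$, and the paper's own proof silently assumes the bound you identify in its inequality \eqref{eq:nonneg}, so your suggested repairs (strengthening the assumption or invoking $p_i^{x,k}\leq p_i^{T,k}$, which holds when infected individuals are no more likely to travel, i.e., $p_i^{x,k}\leq\gamma_i^k$) address a gap the paper leaves unacknowledged.
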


\begin{proof}
We prove the result by induction. By assumption, it holds for the base case $k=0$. We now show the inductive step, where given $s_i^k,e_i^k,x_i^k,r_i^k \in [0,1]$ and $s_i^k+e_i^k+x_i^k+r_i^k=1$ for all $i\in [n]$, we show that the same holds for $k+1$. By \eqref{eq:flows_ind_node_disc} and Assumption \ref{assume:parameters} we  have 
\begin{align}\label{eq:nonneg}
    q_i^{k+1} & \geq h\sum_{j \neq i} \frac{N_j}{N_i} w_{ij}^k p_j^{T,k} q_j^k \geq 0,
\end{align}
where $q_i^k \in \{s^k_i,e^k_i,x^k_i,r^k_i\}$ for all $i\in [n]$ and $k \in \mathbb{Z}_{\geq 0}$. 

Furthermore, computing $s_i^{k+1}+e_i^{k+1}+x_i^{k+1}+r_i^{k+1}$ yields
\small
\begin{align}\nonumber 
    & s_i^{k+1}+e_i^{k+1}+x_i^{k+1}+r_i^{k+1} = s_i^k+e_i^k+x_i^k+r_i^k 
    + h \left(-\beta_i^k x_i^k s_i + \beta_i^k x_i^k s_i - \sigma_i^k e_i^k +\sigma_i^k e_i^k - \delta_i^k x_i^k +\delta_i^k x_i^k \right) \nonumber\\
    & + \frac{h}{N_i}  \sum_{j \neq i} \Big( F_{ij}^k(P(s_i|T_i)+P(e_i|T_i)+P(x_i|T_i)+P(r_i|T_i)) \nonumber 
    - F_{ji}^k(P(s_j|T_j)+P(e_j|T_j)+P(x_j|T_j)+P(r_j|T_j)) \Big) \nonumber \\
    & \ \ \ \ \ \ \ \ \ \ \ \ \ \ \ \ \ \ \ \ \ \ \ \ \ \ \ \ \ \ =s_i^k+e_i^k+x_i^k+r_i^k + \frac{h}{N_i} \left(F^+ - F^- \right) = 1, \label{eq:sum1}
\end{align}
\normalsize
\noindent where $F^+ = \sum_{j \neq i} F_{ij}^k$ and $F^- = \sum_{j \neq i} F_{ji}^k$ denote the total flow in and out of each sub-population, respectively, which are equal by Assumption \ref{assume:equal_flows}. Since
each variable is non-negative by \eqref{eq:nonneg} and their sum must add to 
one by \eqref{eq:sum1}, we have that $s_i^{k+1},e_i^{k+1},x_i^{k+1},r_i^{k+1} \leq 1$. 
Thus, we have shown
that $s_i^{k+1},e_i^{k+1},x_i^{k+1},r_i^{k+1} \in [0,1]$, completing the inductive step.
\end{proof}
\begin{remark}
Assumption \ref{assume:parameters} requires that the sampling parameter be small enough in relation to the model spread parameters such that the model remains well defined. Furthermore, requiring $h(\beta_i^k x_i^k+p^{T,k}_i) \leq 1, h(\sigma_i^k+p^{T,k}_i) \leq 1,$ and $h(\delta_i^k+p^{T,k}_i) \leq 1$ can be interpreted as requiring
that no individual can both travel and transition between infectious
states during the same time step $k \in \mathbb{Z}_{\geq0}$, as our model does not capture infection occurring 
during travel.
\end{remark}
The following are not required for the model to remain well defined. However, we use them in the next section to show that the set of healthy states have a homogeneous structure.

\begin{definition}
A graph $\mathcal{G}^k= (\mathbb{V},\cup_{k\geq 0}\mathbb{E}^k,\mathbb{W})$ for $k \in \mathbb{Z}_{\geq 0}$ is \emph{$K$-strongly connected} if there exist some bound $K$ such that   
$(\mathbb{V},\cup_{j=k}^{k+K-1} \mathbb{E}^j, \mathbb{W})$
is strongly connected, for all $k \in \mathbb{Z}_{\geq 0}$.
\end{definition}
\begin{assumption} \label{assume:strongly_connected_bounded_comms}
Let the graph $\mathcal{G}^k = (\mathbb{V},\cup_{k\geq 0}\mathbb{E}^k,\mathbb{W})$, 
where 
$\mathbb{W}:\mathbb{E}^k\rightarrow \mathbb{R}_{>0}$ is defined by
$w_{ij}^k$, be $K$-strongly connected.
\end{assumption}

\section{Model Analysis} \label{sec:analysis}
In this section we analyze the equilibria of the model in \eqref{eq:flows_ind_node_disc}, i.e., the healthy states of the system, which we define as $q_i^* = \lim_{k \rightarrow \infty} q_i^k$  for all $i \in [n]$ where $q_i^* \in \{s_i^*,e_i^*,x_i^*,r_i^*\}$. We use the following result given in \cite{blondel2005convergence} on the conditions for discrete-time consensus models.

\begin{lemma} \label{lem:consensus}
Let a discrete-time system defined by the transition matrix $L^k$ satisfy the following properties, where $l_{ij}^k$ is the corresponding entry in the $i$th row and $j$th column at time step $k \in \mathbb{Z}_{\geq 0}$:
\begin{enumerate}[label=(\roman*)]
    \item The graph $G = (\mathbb{V},\cup_{k\geq 0}\mathbb{E}^k)$, where the edge weights at time step $k$ are given by $L^k$, 
    is $K$-strongly connected.
    \item There exists a positive constant $y \in \mathbb{R}_{>0}$ such that for all $i,j \in [n]$ and $k \in \mathbb{Z}_{\geq 0}$ we have 
    \begin{enumerate}[label=(\alph*)]
        \item $l_{ii}^k \geq y$ 
        \item $l_{ij}^k \in \{0\} \cup [y,1]$ 
        \item $\sum_{j=1}^n l_{ij}^k = 1$.
    \end{enumerate}
\end{enumerate}
\noindent Then, the system dynamics defined by $L^k$ guarantee asymptotic consensus.
\end{lemma}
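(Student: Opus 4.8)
The plan is to prove consensus by showing that the \emph{spread} of the state vector contracts geometrically, which is the standard route for inhomogeneous products of row-stochastic matrices. Write the dynamics as $z^{k+1} = L^k z^k$ and define the spread $\Delta^k = \max_i z_i^k - \min_i z_i^k$. Properties (ii)(a)--(ii)(c) guarantee that each $L^k$ is row-stochastic with non-negative entries, so every $z_i^{k+1} = \sum_j l_{ij}^k z_j^k$ is a convex combination of the components of $z^k$; hence $\max_i z_i^k$ is non-increasing, $\min_i z_i^k$ is non-decreasing, and $\Delta^k$ is non-increasing. Consensus is precisely $\Delta^k \to 0$, so it suffices to establish a uniform strict contraction of the spread over windows of some fixed length.

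First I would establish that a sufficiently long product of the $L^k$ is a strictly positive (scrambling) matrix with a uniform lower bound on its entries. The positive diagonals from (ii)(a) make the reachable sets monotone: once node $i$ has ``heard from'' node $\ell$ in a partial product, the self-loop keeps that entry positive thereafter. Combined with the $K$-strong connectivity of (i), which forces new information to cross every cut within each window of $K$ steps, every node's reachable set grows to all of $[n]$ after a finite horizon $T$ (of order $nK$, uniform in $k$). Consequently the product $\Phi_k = L^{k+T-1}\cdots L^k$ has all entries bounded below by $\alpha := y^T > 0$, since each positive entry of an $L^{k+t}$ is at least $y$ and a fully positive path contributes a factor of at least $y^T$.

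Next I would convert uniform positivity into contraction. For any stochastic matrix $P$ with $\min_{i,j} P_{ij} \geq \alpha$, Dobrushin's coefficient obeys $\tau(P) = 1 - \min_{i,i'}\sum_j \min(P_{ij}, P_{i'j}) \leq 1 - n\alpha$, and $\tau(P)$ bounds the contraction of the spread, $\max_i (Pz)_i - \min_i (Pz)_i \leq \tau(P)\,\Delta(z)$. Applying this to $P = \Phi_k$ gives $\Delta^{k+T} \leq (1 - n y^T)\,\Delta^k$, with contraction factor $\rho := 1 - n y^T \in [0,1)$. Because the spread is non-increasing between window endpoints, iterating over consecutive windows yields $\Delta^{mT} \leq \rho^m \Delta^0 \to 0$, hence $\Delta^k \to 0$ and the system reaches asymptotic consensus.

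I expect the main obstacle to be the uniform scrambling step: showing that the combination of self-loops and $K$-strong connectivity drives every node's reachable set to the full vertex set within a horizon $T$ that does not depend on $k$, so that the lower bound $y^T$ on the entries of $\Phi_k$ is uniform. The delicacy is that the graph varies with $k$, so one cannot invoke strong connectivity of a single fixed graph; instead one tracks the evolving reachable sets and uses the self-loops to guarantee their monotone growth across each $K$-window. Once this uniform lower bound is secured, the Dobrushin contraction and the telescoping over windows are routine. This result is taken from \cite{blondel2005convergence}, whose argument follows exactly this template.
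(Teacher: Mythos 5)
The paper does not actually prove this lemma---it is quoted directly from \cite{blondel2005convergence}---and your argument reproduces the standard proof used in that reference and throughout the consensus literature: positive self-loops plus $K$-strong connectivity make each windowed product $\Phi_k$ entrywise bounded below by $y^T$ for a uniform horizon $T$ of order $nK$, and the Dobrushin-coefficient (spread-contraction) step then gives geometric decay of $\Delta^k$ and hence consensus, with the monotone $\max$ and $\min$ supplying the common limit. Your proof is correct; the only bookkeeping nit is that since the products accumulate by left-multiplication, the object with monotone growth is the column support of the partial product (the set of nodes that have heard from a fixed node $\ell$), rather than the row ``reachable set'' of a fixed node $i$ as you phrase it, but this swap changes nothing structural in the argument.
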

\noindent We now present results on the asymptotic convergence of the healthy states for the system in \eqref{eq:flows_ind_node_disc}.
\begin{theorem}\label{thm:equilibrium}
Consider the model in \eqref{eq:flows_ind_node_disc} under Assumptions \ref{assume:equal_flows}-\ref{assume:strongly_connected_bounded_comms}. Given that there exists some $i \in [n]$ such that $x_i^0\in (0,1]$ or $e_i^0\in (0,1]$, 
then there exists a set of asymptotically stable equilibria of the form $(\mathbf{s}^*,\mathbf{0},\mathbf{0},1-\mathbf{s}^*)$, 
where $\mathbf{s}^* = \alpha\mathbf{1}$, $\alpha \in [0,1]$. 

\end{theorem}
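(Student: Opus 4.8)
The plan is to split the claim into two structural facts and then assemble stability from them. First I would show that the infection necessarily dies out, i.e.\ $e_i^* = x_i^* = 0$ for every $i$; second, that the surviving susceptible proportions equalize, i.e.\ $s_i^* = \alpha$ for a common $\alpha$, from which $r_i^* = 1-\alpha$ follows because $s_i+r_i\to 1$. The engine for the second fact is the consensus result of Lemma~\ref{lem:consensus}, so the bulk of the work is to exhibit the susceptible dynamics as an (asymptotically) row-stochastic, $K$-strongly connected linear recursion. Asymptotic stability of the resulting set will then follow by combining the attractivity from these two steps with the simplex-invariance from Lemma~\ref{lem:well_defined}.

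For the die-out step I would track the population-weighted aggregates $S^k := \sum_i N_i s_i^k$ and $R^k := \sum_i N_i r_i^k$. Summing the $s$- and $r$-rows of \eqref{eq:flows_ind_node_disc} over $i$ and swapping the order of summation in the flow term, the interconnection contributions cancel because $\sum_{i\neq j} w_{ij}^k = 1$ by \eqref{eq:normalized}; this is the same column-stochasticity already used in Lemma~\ref{lem:well_defined}. What survives is $S^{k+1} = S^k - h\sum_i N_i \beta_i^k x_i^k s_i^k \le S^k$ and $R^{k+1} = R^k + h\sum_i N_i \delta_i^k x_i^k \ge R^k$. Both sequences are bounded (the states stay in the simplex), hence monotone-convergent, forcing $\sum_i N_i \delta_i^k x_i^k \to 0$ and $\sum_i N_i \beta_i^k x_i^k s_i^k \to 0$. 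Provided the rates are bounded below by positive constants, the first limit gives $x_i^k \to 0$ for all $i$; feeding this into $E^k := \sum_i N_i e_i^k$, whose increment is $h\sum_i N_i(\beta_i^k x_i^k s_i^k - \sigma_i^k e_i^k)$, and using that $E^k = N_{\mathrm{tot}} - S^k - X^k - R^k$ converges (with $X^k:=\sum_i N_i x_i^k\to 0$), yields $e_i^k \to 0$. This establishes the two $\mathbf{0}$ blocks.

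For the consensus step, observe that at $x_i = 0$ the defining relation \eqref{eq:p_i^T} collapses to $p_i^T = \gamma_i$. Substituting this together with $F_{ij} = \gamma_j w_{ij} N_j$ from \eqref{eq:F_ij}, the susceptible recursion becomes $s_i^{k+1} = (1 - h\gamma_i^k) s_i^k + h \sum_{j \neq i} \tfrac{N_j}{N_i} w_{ij}^k \gamma_j^k s_j^k$, which I would read as $s^{k+1} = L^k s^k$. The crucial point is that $L^k$ is row-stochastic: its off-diagonal row sum is $\tfrac{h}{N_i}\sum_{j\neq i} \gamma_j^k w_{ij}^k N_j = \tfrac{h}{N_i}\sum_{j\neq i} F_{ij}^k$, which equals $h\gamma_i^k = \tfrac{h}{N_i}\sum_{j\neq i} F_{ji}^k$ exactly by the flow-balance Assumption~\ref{assume:equal_flows}, cancelling the $-h\gamma_i^k$ on the diagonal. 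The diagonal entry $1 - h\gamma_i^k$ is bounded below away from zero using $h(\delta_i^k + p_i^{T,k}) \le 1$ from Assumption~\ref{assume:parameters} with $\delta_i^k$ bounded below, the active off-diagonal entries are bounded below because $w_{ij}^k$ and $\gamma_j^k$ are, and the edge set is $K$-strongly connected by Assumption~\ref{assume:strongly_connected_bounded_comms}. Hence $L^k$ satisfies hypotheses (i)--(ii) of Lemma~\ref{lem:consensus}, giving $s_i^k \to \alpha$ for a common $\alpha\in[0,1]$, i.e.\ $\mathbf{s}^* = \alpha\mathbf{1}$.

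The main obstacle is that the reduced recursion $s^{k+1}=L^k s^k$ holds only in the limit: along the true trajectory the update carries the extra term $-h\beta_i^k x_i^k s_i^k$ and uses $p_i^{T,k} = \gamma_i^k + O(x_i^k)$ rather than $\gamma_i^k$, so the genuine transition matrix is row-stochastic only up to an $O(\sum_j x_j^k)$ residual. I would handle this by writing the dynamics as a consensus recursion driven by a vanishing disturbance and noting that the die-out step in fact makes $\sum_k \sum_i N_i \beta_i^k x_i^k s_i^k$ finite, since it telescopes to $S^0 - S^\infty$; the disturbance is therefore summable, and a summable perturbation of a uniformly $K$-connected, uniformly lower-bounded consensus system still converges to the consensus subspace, so the conclusion of Lemma~\ref{lem:consensus} persists. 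Finally, for asymptotic stability of the set $\{(\alpha\mathbf{1},\mathbf{0},\mathbf{0},(1-\alpha)\mathbf{1}) : \alpha\in[0,1]\}$, I would verify that each such point is a fixed point of \eqref{eq:flows_ind_node_disc} (immediate from row-stochasticity), obtain attractivity from the two convergence arguments above, and obtain Lyapunov stability from the monotonicity of $S^k$ and $R^k$ together with the contraction toward the consensus subspace supplied by Lemma~\ref{lem:consensus}.
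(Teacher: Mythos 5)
Your proposal is correct and follows the paper's skeleton exactly: the same two-part decomposition (infection die-out via monotone aggregate quantities, then consensus of the susceptible states via Lemma~\ref{lem:consensus} applied to the matrix $L^k$ of \eqref{eq:suseptible-x_i=0-matrix}, with row-stochasticity $L^k\mathbf{1}=\mathbf{1}$ obtained from Assumption~\ref{assume:equal_flows} and \eqref{eq:gamma}, precisely as in the paper). Your execution differs in three respects, each of which buys something. (1) You aggregate with population weights, $S^k=\sum_i N_i s_i^k$, for which the flow terms cancel identically by the symmetric double-sum identity (both sums equal $\sum_l P(q_l|T_l)$ times the outflow of $l$ --- note this cancellation does not even need Assumption~\ref{assume:equal_flows}, and it is not quite the ``column-stochasticity'' you cite). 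The paper instead sums raw proportions, $\mathbf{S}^k=\sum_i s_i^k$, and asserts \eqref{eq:flows_ind_node_disc_sum_simplified}; with heterogeneous $N_i$ the inflow contribution $\sum_j N_j p_j^{T,k} s_j^k \sum_{i\neq j} w_{ij}^k/N_i$ does not cancel $\sum_i p_i^{T,k}s_i^k$ in general (a two-node example with $N_1\neq N_2$ already breaks it), so your weighted bookkeeping is the version for which the telescoping computation is literally valid. (2) You extract $x_i^k\to 0$ directly from monotonicity of the recovered aggregate, a shortcut relative to the paper's chain through $\mathbf{S}^k$, then $\mathbf{E}^k$, then $\mathbf{X}^k$; both routes quietly require the time-varying rates $\sigma_i^k,\delta_i^k$ (and the entries $w_{ij}^k,\gamma_i^k$ for the consensus step) to be uniformly bounded below, which Assumption~\ref{assume:parameters} does not state --- you flag this proviso explicitly, the paper does not. (3) Most substantively, you close a gap the paper leaves open: \eqref{eq:suseptible-x_i=0} is only the limiting dynamics, since along the true trajectory the susceptible update carries the extra term $-h\beta_i^k x_i^k s_i^k$ and, by \eqref{eq:p_i^T}, $p_i^{T,k}-\gamma_i^k = x_i^k(\gamma_i^k-p_i^{x,k})/(1-x_i^k)=O(x_i^k)$. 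The paper applies Lemma~\ref{lem:consensus} to the limiting system without addressing this; your summable-perturbation patch (telescoping $S^k$ makes $\sum_k\sum_i \beta_i^k x_i^k s_i^k$ finite, summability of $\sum_k x_i^k$ follows from the $\mathbf{X}$-telescope given $\delta_i^k$ bounded below, and a diameter-contraction argument for uniformly $K$-connected stochastic products survives an $\ell_1$ disturbance) is exactly what is needed to make the consensus conclusion rigorous along the actual trajectory. Your closing remarks on fixed-point verification and Lyapunov stability of the equilibrium set likewise go slightly beyond the paper, which in substance establishes only convergence.
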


\begin{proof}
We prove the result by splitting it into
two parts, namely,
\begin{enumerate}
    \item $e_i^* = 0$ and $x_i^* = 0$, for all $i \in [n]$
    \item $s_i^* = \alpha
    $  for all $i \in [n]$.
\end{enumerate}
The parts are presented sequentially:

1) 
We denote $\mathbf{S}^k =\sum_{i=1}^n s_i^k \in [0,n]$, $\mathbf{E}^k =\sum_{i=1}^n e_i^k\in [0,n]$ and $\mathbf{X}^k = \sum_{i=1}^n x_i^k\in [0,n]$ as the sum of the susceptible, exposed, and infected states of all the sub-populations, respectively. Hence, based on the dynamics of the system states in~\eqref{eq:flows_ind_node_disc} and by Assumption \ref{assume:equal_flows}, we have
\begin{subequations}
 \label{eq:flows_ind_node_disc_sum_simplified}
\begin{align}
     \mathbf{S}^{k+1} &=\mathbf{S}^k - h \sum_{i=1}^n\beta_i^k x_i^k s_i^k \\
     \mathbf{E}^{k+1} &=\mathbf{E}^k + h \left(\sum_{i=1}^n\beta_i^k x_i^k s_i^k - \sum_{i=1}^n \sigma_i^ke_i^k \right) \\
     \mathbf{X}^{k+1} &= \mathbf{X}^k + h \left(\sum_{i=1}^n \sigma_i^k e_i^k -\sum_{i=1}^n \delta_i^k x_i^k   \right).
\end{align}
\end{subequations}
By Assumption~\ref{assume:parameters} and Lemma~\ref{lem:well_defined}, we have that the rate of change of $\mathbf{S}^k$, $-h \sum_{i=1}^n\beta_i^k x_i^k s_i^k$, is non-positive for all $k\geq 0$ and $\mathbf{S}^k$ is lower bounded by zero. Hence, by Lemma~\ref{lem:well_defined}, we obtain that $\lim_{k\rightarrow \infty}\mathbf{S}^k$ exists, and
\begin{equation}
    \lim_{k\rightarrow \infty} -h \sum_{i=1}^n\beta_i^k x_i^k s_i^k=0.
\end{equation}
Accordingly, we can write that $\lim_{k\rightarrow \infty}(\mathbf{E}^{k+1}-\mathbf{E}^{k})=\lim_{k\rightarrow \infty}-h\sum_{i=1}^n \sigma_i^ke_i^k$. 
From Assumption~\ref{assume:parameters}, we know that $h\sigma_i^k>0$ for all $i\in[n]$ and $k\in \mathbb{Z}_{\geq 0}$. Therefore, by Lemma~\ref{lem:well_defined}, we conclude that $\lim_{k\rightarrow \infty}\mathbf{E}^{k}=0$ and thus $\lim_{k\rightarrow 
\infty}e_i^{k}=0$ for all $i\in[n]$.

Similarly, we acquire that $\lim_{k\rightarrow \infty}(\mathbf{X}^{k+1}-\mathbf{X}^{k})=\lim_{k\rightarrow \infty}-h\sum_{i=1}^n \delta_i^kx_i^k$, since $\lim_{k\rightarrow \infty}e_i^{k}=0$ for all $i\in[n]$. By Assumption~\ref{assume:parameters} and Lemma~\ref{lem:well_defined}, $h\delta_i^k>0$ and $x_i^k$ is well-defined, we acquire that $\lim_{k\rightarrow \infty}\mathbf{X}^{k}=0$ and thus $\lim_{k\rightarrow \infty}x_i^{k}=0$ for all $i\in[n]$.

2) In order to show that the susceptible states become equally mixed
as $k \rightarrow \infty$, where $k \in \mathbb{Z}_{\geq 0}$, we leverage Lemma~\ref{lem:consensus}.
First, we construct dynamics for the susceptible states in \eqref{eq:s_flows_disc} when $x_i^*=e_i^* = 0$ for all $i \in [n]$, as proven in 1), which yields
\begin{equation} \label{eq:suseptible-x_i=0}
    s_i^{k+1} = s_i^k + h \left( - \gamma_i^k s_i^k +\sum_{j \neq i} \frac{N_j}{N_i} w_{ij}^k \gamma_i^k s_j^k \right),
\end{equation}
\noindent where $p^{T,k}_i = \gamma_i$ by \eqref{eq:p_i^T} and Lemma \ref{lem:well_defined}. We can express these dynamics in matrix form as 
\begin{equation} \label{eq:suseptible-x_i=0-matrix}
    s^{k+1} = \underbrace{\left( I + h \left( -\Gamma^k + N^{-1}W^k \Gamma^k N \right) \right)}_{L^k} s^k,
\end{equation}
\noindent where $s^k = [s_1^k, \dots, s_n^k]^{\top}$, $\Gamma^k = \text{diag}( \gamma_1^k ,\dots,\gamma_n^k)$, $N = \text{diag}(N_1,\dots,N_n)$, and $W^k$ is the matrix defined by entries $w_{ij}^k$ on the $i$th row and $j$th column. In order to guarantee that $s_i^*$ is homogeneous for all $i \in [n]$, 
we will apply Lemma~\ref{lem:consensus} to~\eqref{eq:suseptible-x_i=0-matrix}.

Property \textit{(i)} of 
Lemma~\ref{lem:consensus} holds
by Assumption~\ref{assume:strongly_connected_bounded_comms}. Furthermore, \textit{(ii.a)} and \textit{(ii.b)} are true by construction. We now show \textit{(ii.c)} is true by computing $L^k \mathbf{1}$, which yields
\begin{align}
    &L^k \mathbf{1} = \mathbf{1} + h\left( -\bm{\gamma}^k + N^{-1}W^k \Gamma^k N \mathbf{1} \right),
\end{align}
\noindent where $\bm{\gamma}^k = [\gamma_i^k, \dots, \gamma_n^k]^\top$. Computing $N^{-1}W^k \Gamma^k N \mathbf{1}$ yields
\begin{align} \label{eq:L1_first_step}
    &N^{-1}W^k \Gamma^k N \mathbf{1} = N^{-1}
    \begin{bmatrix}
    \sum_{j \neq 1}\gamma_j^k w_{1j}^k N_j \\ \vdots \\ \sum_{j \neq n}\gamma_j^k w_{nj}^k N_j 
    \end{bmatrix}. 
\end{align}
\noindent Using \eqref{eq:F_ij}, we can write \eqref{eq:L1_first_step} as 
\begin{align}
    N^{-1}
    \begin{bmatrix}
    \sum_{j \neq 1} F_{1j}^k \\ \vdots \\ \sum_{j \neq n}F_{nj}^k
    \end{bmatrix} = 
    \begin{bmatrix}
    \sum_{j \neq 1} \frac{F_{1j}^k}{N_1} \\ \vdots \\ \sum_{j \neq n}\frac{F_{nj}^k}{N_n}
    \end{bmatrix} ,
\end{align}
\noindent which, by Assumption \ref{assume:equal_flows} and \eqref{eq:gamma}, becomes
\begin{align*}
    \begin{bmatrix}
    \sum_{i \neq 1} \frac{F_{i1}^k}{N_1} \\ \vdots \\ \sum_{i \neq n}\frac{F_{in}^k}{N_n}
    \end{bmatrix} =
    \begin{bmatrix}
    \gamma_1^k \\ \vdots \\ \gamma_n^k
    \end{bmatrix} = \bm{\gamma}^k.
\end{align*}
\noindent Thus, $L^k \mathbf{1} = \mathbf{1}$, showing that $L^k$ meets the requirements to guarantee consensus of the susceptible states as stated in Lemma~\ref{lem:consensus}.  Therefore, 
$s_i^* = \alpha$ for all $i \in [n]$, where by Lemma \ref{lem:well_defined} we have that $\alpha \in [0,1]$.
\end{proof}

\section{Feedback Control} \label{sec:control}


We now propose a feedback control strategy for the model in \eqref{eq:flows_ind_node_disc} derived from the current infection states. Given that $\gamma_i^k$ represents the unimpeded flow out of sub-population $i$ at time step $k \in \mathbb{Z}_{\geq 0}$, we can implement a scheme that restricts travel between all sub-populations proportionally:
\begin{equation}\label{eq:control}
    \tilde{\gamma}_i^{k} = \theta^k\gamma_i^{k}, 
\end{equation}
%
\noindent where 
$\theta^k \in [0,1]$ is the flow restriction penalty.
We propose a flow restriction penalty that is a function of the average infection level:
\begin{equation}\label{eq:theta}
    \theta^k = 1-(\bar{x}^{k})^{\frac{1}{\eta}},
\end{equation}
where
$\bar{x}^{k} = \frac{1}{n} \sum_{i \in [n]} x_i^k$ is the average proportion of infected individuals across all sub-populations and $\eta \in \mathbb{R}_{>0}$ can be viewed as a sensitivity parameter, where $\eta >1$ denotes a higher sensitivity and $\eta < 1$ denotes a lower sensitivity to the average infection level in the network.
The magnitude of $\eta$ can also be viewed as the strength of the controller in reaction to the overall infection. 
We now show that applying the strategy in \eqref{eq:control} still maintains the assumptions imposed in Section \ref{sec:model}, 
enforcing that the model remains well defined.
We define $\widetilde{\eqref{eq:flows_ind_node_disc}}$ as the system with dynamics in \eqref{eq:flows_ind_node_disc} including the control strategy in \eqref{eq:control}.
\begin{proposition}\label{prop}
Consider $\widetilde{\eqref{eq:flows_ind_node_disc}}$ 
under Assumption \ref{assume:parameters} and with $\theta^k \in [0,1] $ for all $k\geq 0$. If with $s_i^0,e_i^0,x_i^0,r_i^0 \in [0,1]$ and $s_i^0+e_i^0+x_i^0+r_i^0=1$ for all $i\in [n]$, then $s_i^k,e_i^k,x_i^k,r_i^k \in [0,1]$ and $s_i^k+e_i^k+x_i^k+r_i^k=1$, for all $k \geq 0$ and $i\in [n]$. 
\end{proposition}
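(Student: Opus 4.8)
The plan is to observe that the controlled system $\widetilde{\eqref{eq:flows_ind_node_disc}}$ is itself an instance of the uncontrolled family \eqref{eq:flows_ind_node_disc}, obtained by a uniform rescaling of the travel parameters, and then to invoke Lemma \ref{lem:well_defined} directly. First I would trace how the control \eqref{eq:control} propagates through the derived quantities. Since restricting $\gamma_i^k$ to $\tilde{\gamma}_i^k = \theta^k \gamma_i^k$ scales the outgoing flow uniformly, the infected-travel parameter becomes $\theta^k p_i^{x,k}$ and, substituting into \eqref{eq:p_i^T}, the pooled parameter becomes $\tilde{p}_i^{T,k} = \theta^k p_i^{T,k}$, while the mixing weights $w_{ij}^k$ are left unchanged; equivalently, every flow is scaled as $\tilde{F}_{ij}^k = \theta^k F_{ij}^k$. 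Thus the controlled dynamics have exactly the structural form of \eqref{eq:flows_ind_node_disc} with $p^{T,k}$ and $p^{x,k}$ replaced by $\theta^k p^{T,k}$ and $\theta^k p^{x,k}$, so that the entire machinery of Lemma \ref{lem:well_defined} becomes available once the hypotheses are re-verified for these rescaled parameters.

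For the non-negativity half, I would check that Assumption \ref{assume:parameters} is inherited. Because $\theta^k \in [0,1]$ we have $\theta^k p_i^{T,k} \leq p_i^{T,k}$, so $h(\beta_i^k + \theta^k p_i^{T,k}) \leq h(\beta_i^k + p_i^{T,k}) \leq 1$, and likewise for the bounds involving $\sigma_i^k$ and $\delta_i^k$. Hence each self-coefficient, such as $1 - h(\beta_i^k x_i^k + \theta^k p_i^{T,k})$, remains non-negative, and the inductive non-negativity estimate \eqref{eq:nonneg} carries over verbatim with $p^{T,k}$ replaced by $\theta^k p^{T,k}$.

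For the conservation half, I would confirm that the flow-balance property (Assumption \ref{assume:equal_flows}) used in \eqref{eq:sum1} survives the control. Since every $F_{ij}^k$ is scaled by the common factor $\theta^k$, the total inflow and outflow at each node scale identically, giving $\sum_{i \neq j}\tilde{F}_{ij}^k = \theta^k \sum_{i \neq j} F_{ij}^k = \theta^k \sum_{i \neq j} F_{ji}^k = \sum_{i \neq j}\tilde{F}_{ji}^k$. The telescoping cancellation in \eqref{eq:sum1} then proceeds with $F^{+}$ and $F^{-}$ replaced by $\theta^k F^{+}$ and $\theta^k F^{-}$, preserving $s_i^{k+1}+e_i^{k+1}+x_i^{k+1}+r_i^{k+1}=1$. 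Combining this with non-negativity yields each state in $[0,1]$, and the induction closes exactly as in Lemma \ref{lem:well_defined}.

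The step I expect to require the most care is the first one: correctly establishing that the proportional restriction $\tilde{\gamma}_i^k = \theta^k \gamma_i^k$ induces precisely the uniform scalings $\tilde{p}_i^{T,k} = \theta^k p_i^{T,k}$ and $\tilde{p}_i^{x,k} = \theta^k p_i^{x,k}$ through \eqref{eq:p_i^T}, since every downstream claim reduces to monotonicity in $\theta^k$ once this scaling identity is pinned down. After that, the argument is essentially a restatement of Lemma \ref{lem:well_defined} with the observation that decreasing the travel parameters only relaxes the relevant inequalities.
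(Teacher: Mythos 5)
Your proposal is correct and follows essentially the same route as the paper: observe that the control amounts to a uniform scaling $\tilde{F}_{ij}^k = \theta^k F_{ij}^k$ (equivalently $\tilde{p}_i^{q,k} = \theta^k p_i^{q,k}$), verify that the flow balance of Assumption \ref{assume:equal_flows} is preserved via $\tilde{F}^+ - \tilde{F}^- = \theta^k(F^+ - F^-) = 0$, and invoke Lemma \ref{lem:well_defined}. Your explicit check that Assumption \ref{assume:parameters} is inherited for the rescaled parameters (since $\theta^k \leq 1$ only relaxes the bounds $h(\beta_i^k + \theta^k p_i^{T,k}) \leq 1$, etc.) is a detail the paper leaves implicit, but it is the same argument, not a different one.
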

\begin{proof}
As control is only applied to $\gamma_i^k$ for all $i \in [n]$, we must only verify that Assumption \ref{assume:equal_flows} is met under the proposed control strategy and the rest will follow from Lemma \ref{lem:well_defined}. Since $\theta^k \in [0,1]$ we have by \eqref{eq:gamma} that $\tilde{\gamma}_i^{k} \in [0,1]$. 
Furthermore, by \eqref{eq:F_ij} we have
\begin{equation}
    \tilde{F}_{ij}^k = \tilde{\gamma_j}^k w_{ij}^k N_j.
\end{equation}
\noindent Computing the flows in versus the flows out, $\tilde{F}^+-\tilde{F}^-$, yields
\begin{align*}
    \tilde{F}^+-\tilde{F}^- &= \sum_{j \neq i} \tilde{F}_{ij}^k - \sum_{j \neq i} \tilde{F}_{ji}^k \\
    &=\sum_{j \neq i} \tilde{\gamma_j}^k w_{ij}^k N_j - \sum_{j \neq i} \tilde{\gamma_i}^k w_{ji}^k N_i \\ 
    &= \theta^k \left(\sum_{j \neq i} \gamma_j^k w_{ij}^k N_j - \sum_{j \neq i} \gamma_i^k w_{ji}^k N_i \right) \\
    &= \theta^k(F^+-F^-) = 0.
\end{align*}
Thus, Assumption \ref{assume:equal_flows} is maintained. Therefore, by Lemma \ref{lem:well_defined} $s_i^k,e_i^k,x_i^k,r_i^k \in [0,1]$ and $s_i^k+e_i^k+x_i^k+r_i^k=1$, for all $k \geq 0$ and $i\in [n]$. 
\end{proof}

By restricting $\gamma_i^k$, we directly reduce the flow of both infected and non-infected individuals according to \eqref{eq:p_i^T} as
\begin{equation}
\small
    \theta^k \left(p_i^{T,k}(s_i^k+e_i^k+r_i^k)+p_i^{x,k}x_i^k \right) = \theta^k\gamma_i^k .
\end{equation}
\noindent Thus, our controlled flow rates, with respect to the conditional probability parameters, are given by
\begin{equation}
    \tilde{p}_i^{q,k} = \theta^k {p}_i^{q,k}  
\end{equation}
\noindent where $p_i^{q,k} \in \{p_i^{s,k},p_i^{e,k},p_i^{x,k},p_i^{r,k} \}$.

The control strategy in \eqref{eq:control}-\eqref{eq:theta} effectively restricts flow between all sub-populations proportionally in the presence of any amount of system infection. Note that by construction $ \theta^k = 1-(\bar{x}^{k})^{\frac{1}{\eta}} \in [0,1] $ for all $k\geq 0$. Thus, by Proposition~\ref{prop}, the system is well defined.

It should be noted that while proportional restrictions to all flow are not the most precise form of control that can be applied to this model, this approach is not dissimilar to the travel policies on global and regional flights during the height of the COVID-19 pandemic~\cite{suzumura2020impact}. In the following section we apply this control strategy to a simplified model of a travel network between populous cities based on median flight data, and evaluate its effectiveness on mitigating disease spread.

\section{Simulations} \label{sec:simulations}

\graphicspath{{Images/}}

In this section, we detail the methods and parameters used to simulate our model, its limiting behavior, and our proposed control strategy as described in Sections \ref{sec:model}-\ref{sec:control}. We construct our simulations using population data from the US cities of Atlanta, Los Angeles, Chicago, and Dallas, and the flights between each city's primary airport (ATL, LAX, ORD, DFW). The infection starts in Los Angeles and propagates through the network, reaching an equilibrium where $x^k_i = 0$ for all $ i \in [n]$.

To simulate the states for the SEIR model we use \eqref{eq:flows_ind_node_disc} with fixed  homogeneous spread parameters (i.e., the same for every sub-population 
and static), $(\beta,\delta,\sigma,h,p^x)=(0.5,0.34,0.19,0.14,0.005)$.
The population of each city is given by $(N_{\text{ATL}}, N_{\text{LAX}}, N_{\text{ORD}}, N_{\text{DFW}}) \approx (0.5, 4,2.7,1.3)*10^6$, where the population sizes are approximated from \cite{census2019} and \cite{datacommons}.
The population traveling between the cities is approximated by the median number of daily flights between the airports in March 2021\cite{flightsfrom}:
\begin{equation*}
    F = \xi \begin{bmatrix}
    0  & 15 & 23 & 19\\
    15 & 0  & 22 & 21\\
    23 & 22 & 0  & 23\\
    19 & 21 & 23 & 0
    \end{bmatrix},
\end{equation*}
where $\xi \in \mathbb{R}_{\geq 0}$ is a scaling factor, which is
used to increase or decrease the total volume of population flow.
The initial conditions for the sub-populations 
are $s^0=[1,0.99,1,1], \ e^0=[0,0.005,0,0], \ x^0=[0,0.005,0,0]$, and $r^0=[0,0,0,0]$.

\begin{figure}
    \centering
    \begin{overpic}[trim = 0.69cm 0.8cm 0 0, clip, width = .5\columnwidth]{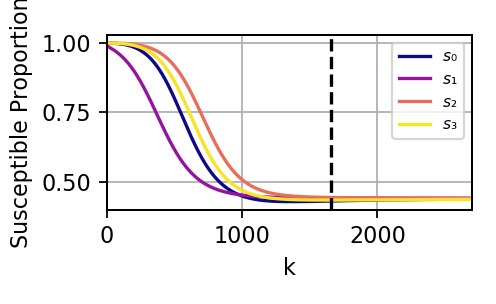}
    \put(-6.5,25){\large{\parbox{0.75\linewidth}\large \rotatebox{90}{$s_i^k$}}}
     \put(55,-2){\large{\parbox{0.75\linewidth}\large $k$
     }}\normalsize
     \put(86.5,35){\colorbox{white}{\parbox{0.03\linewidth}{\footnotesize $s_0^k$
     
     \vspace{.45ex}
     
     $s_1^k$
     
     \vspace{.45ex}
     
     $s_2^k$
     
     \vspace{.45ex}
     
     $s_3^k$}}}
   \end{overpic}
    \caption{
    An illustration of the model in \eqref{eq:flows_ind_node_disc} reaching consensus after the infection dies out in the system, indicated by the vertical black dotted line where $\bar{x} \leq 10^{-4}$.}
    \label{fig:consensus_plot}
\end{figure}

First, we show the consensus behavior from Theorem~\ref{thm:equilibrium} by simulating \eqref{eq:flows_ind_node_disc} using the given initial conditions and model parameters. In Figure~\ref{fig:consensus_plot}, we see that the susceptible proportions of sub-populations 
reach consensus. 
The dashed vertical line in the graph indicates the time step at which the average infected proportion of the population reaches near zero ($\bar{x}^k \leq 10^{-4}$). 
These results suggest that, by the end of the epidemic process, the flow dynamics dominate the behavior of the model and eventually lead to an equal mixing of susceptible and recovered populations given enough time, as suggested by Theorem~\ref{thm:equilibrium}.

\begin{figure}
\centering
    \begin{overpic}[trim=2cm 2.2cm 2.1cm 1,clip,width=0.48\columnwidth]{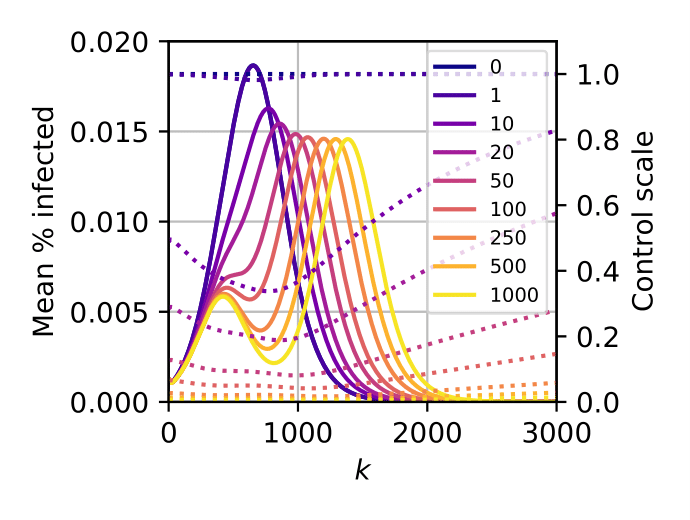}
    \put(-4,38){{\parbox{0.75\linewidth}\footnotesize \rotatebox{90}{$\bar{x}^k$}}}
    \put(50,-3){\large{\parbox{0.75\linewidth}\large $k$}}
    \put(100,38){\normalsize{\parbox{0.75\linewidth}\small \rotatebox{90}{$\theta^k$}}}\normalsize
    \end{overpic}
    \hfill
    \begin{overpic}[trim=1 2cm 1 -1cm,clip,width=0.47\columnwidth]{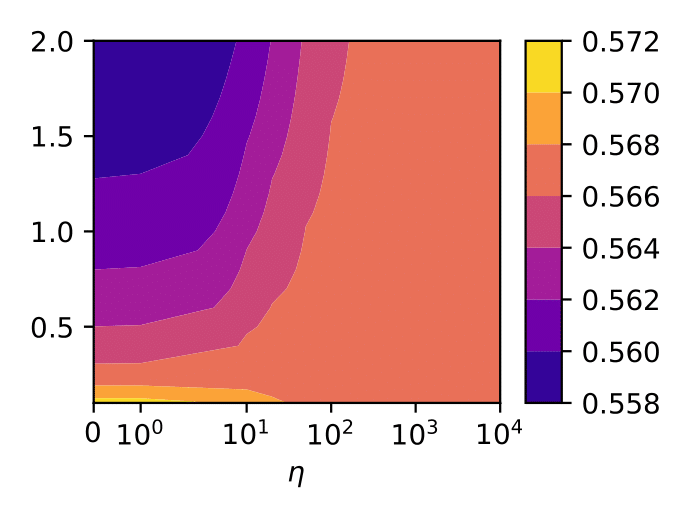}
    \put(-2,34){{\parbox{0.75\linewidth}\small \rotatebox{90}{$\gamma$}
     }}\normalsize
    \put(44,-1){{\parbox{0.75\linewidth}\small
     $\eta$ }
     }
     \put(100,34){{\parbox{0.75\linewidth}\small \rotatebox{90}{$\bar{r}^*$}}}
     \normalsize
   \end{overpic}
\caption{System response to the controller in \eqref{eq:control}-\eqref{eq:theta} with different 
strength values ($\eta$) and flow rates ($\gamma$). (Top) Plot of $\bar{x}^k$
with $\eta$ ranging from $0$ to $1000$ and $\xi=100$. An $\eta=0$ means the control strategy is not used. The dotted lines of corresponding color denote the control penalty applied
to the flow rates in the system with respect to the total infection level. (Bottom) The average proportion of recovered individuals $\bar{r}^*$ for a spectrum of equilibria.
The baseline $\gamma=1.0$ is when $\xi=100$.}
\label{fig:control_no_vaccine}
\end{figure}

\begin{figure}
\centering
    \begin{overpic}[trim=2cm 2.2cm 2.1cm 1,clip,width=0.47\columnwidth]{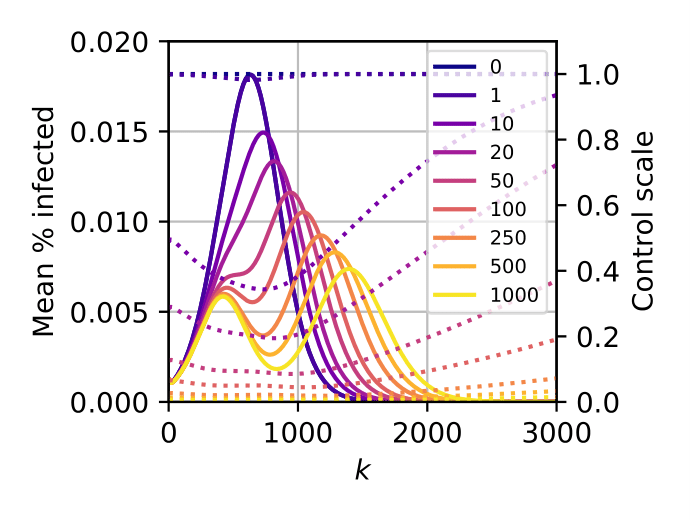}
    \put(-4,38){{\parbox{0.75\linewidth}\footnotesize \rotatebox{90}{$\bar{x}^k$}}}
    \put(50,-3){\large{\parbox{0.75\linewidth}\large $k$}}
    \put(100,38){\normalsize{\parbox{0.75\linewidth}\small \rotatebox{90}{$\theta^k$}}}\normalsize
    \end{overpic}
    \hfill
    \begin{overpic}[trim=1 2cm 1 -1cm,clip,width=0.47\columnwidth]{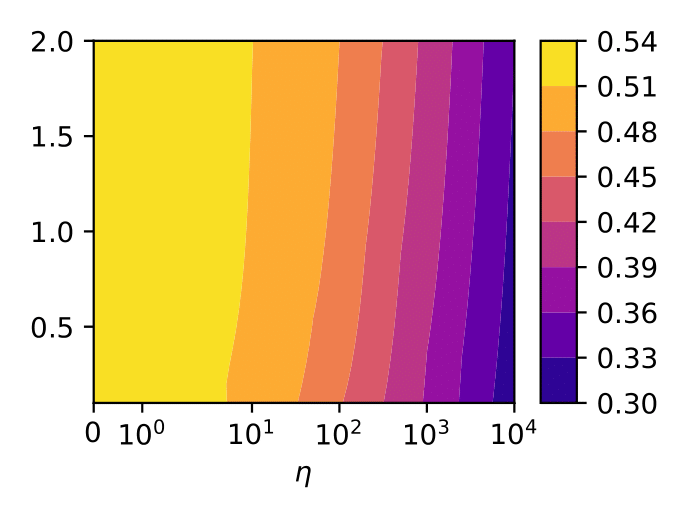}
    \put(-2,34){{\parbox{0.75\linewidth}\small \rotatebox{90}{$\gamma$}
     }}\normalsize
    \put(44,-1){{\parbox{0.75\linewidth}\small
     $\eta$ }
     }
     \put(100,30){{\parbox{0.75\linewidth}\small \rotatebox{90}{$\delta \sum_{k} \bar{x}^k$}}}
     \normalsize
   \end{overpic}
\caption{Implementation of a vaccine roll-out. Plots mirror those in Figure~\ref{fig:control_no_vaccine}, with the roll-out starting at $k=500$, moving $0.1\%$ of 
$s_i^k\rightarrow r_i^k$ until $\bar{s}^k=0.01$. 
(Top) Plot of $\bar{x}^k$ 
with $\eta$ ranging from $0$ to $1000$ and $\xi=100$. An $\eta=0$ means the control strategy is not used. The dotted lines of corresponding color denote the control penalty applied to the flow rates in the system with respect to the total infection level. (Bottom) The average proportion of recovered individuals due to infection $\delta \sum_{k} \bar{x}^k$, which is equivalent to $\bar{r}^*$ when there is no vaccine,
for a spectrum of equilibria.
The baseline $\gamma=1.0$ is when $\xi=100$. }
\label{fig:control_vaccine}
\end{figure}

\begin{figure}
    \centering
    \begin{overpic}[trim=2cm 2.3cm 2.2cm 1,clip,width=0.5\columnwidth]{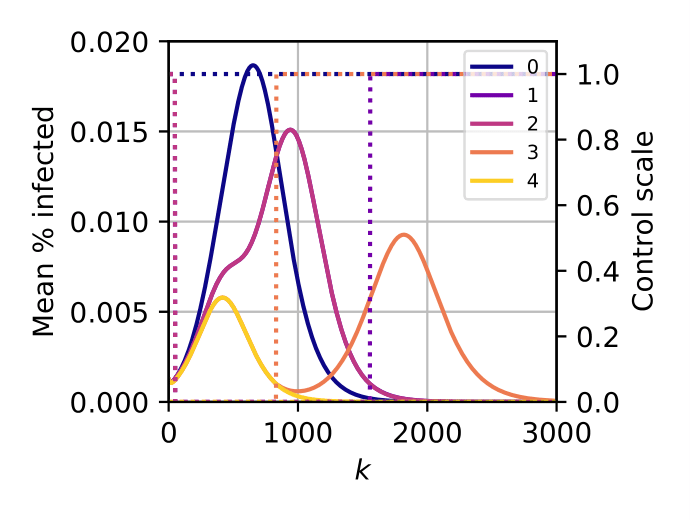}
    \put(-4,38){{\parbox{0.75\linewidth}\footnotesize \rotatebox{90}{$\bar{x}^k$}}}
    \put(50,-3){\large{\parbox{0.75\linewidth}\large $k$}}
    \put(100,38){{\parbox{0.75\linewidth}\small \rotatebox{90}{$\theta^k$}}}
    \normalsize
    \end{overpic}
    \caption{
    System response (solid lines) to the controller in \eqref{eq:control} with an on-off flow penalty (dotted lines).
    The case $\eta = 0$ has $\theta^k = 1 $ for all $ k \geq 0$, the cases $\eta = 1,2$ have $\theta^k=1 $ for all $ k \in \{0, \dots, 49\}$ and set $\theta^k=0$ starting at $k=50$, the cases $\eta = 3,4$ have $\theta^k=0$ starting from $k=0$,
    cases $\eta = 1,3$ re-open travel (set $\theta^k=1$) for all $k$ after $\bar{x}^k < 0.001$, and cases $\eta = 2,4$ only re-open travel after $\bar{x}^k \approx 0$. 
    }
    \label{fig:binary_controller_no_vaccine}
\end{figure}

\begin{figure}
    \centering
    \begin{overpic}[trim=2cm 2.3cm 2.2cm 1,clip,width=0.5\columnwidth]{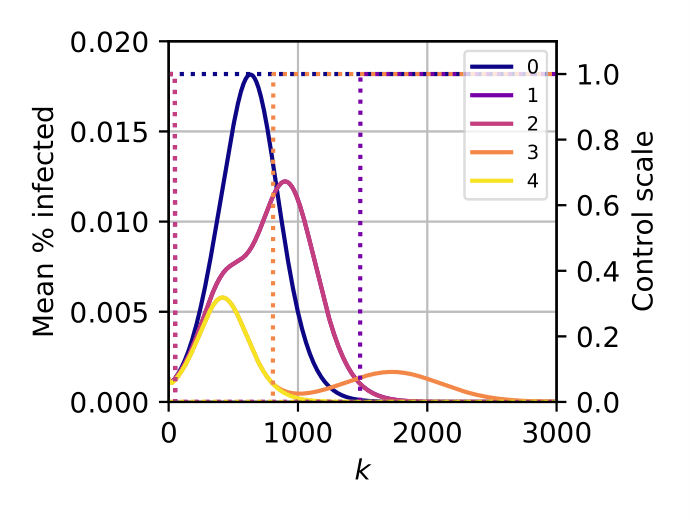}
    \put(-4,38){{\parbox{0.75\linewidth}\footnotesize \rotatebox{90}{$\bar{x}^k$}}}
    \put(50,-3){\large{\parbox{0.75\linewidth}\large $k$}}
    \put(100,38){{\parbox{0.75\linewidth}\small \rotatebox{90}{$\theta^k$}}}
    \normalsize
    \end{overpic}
    \caption{
    System response to the controller in \eqref{eq:control} with an on-off flow penalty combined with a vaccine roll-out. 
    The on-off controller is identical to  Figure~\ref{fig:binary_controller_no_vaccine} and the vaccine roll-out is identical Figure \ref{fig:control_vaccine}. 
    }
    \label{fig:binary_controller_vaccine}
\end{figure}

Second, we simulate the control strategy proposed in Section \ref{sec:control} with and without an additional heuristic for vaccine distribution. With no vaccine strategy applied, these simulations show that exclusively implementing a control law that uniformly reduces travel 
based on $\bar{x}^k$, namely using \eqref{eq:control}-\eqref{eq:theta}, does
not have a significant impact on the total number of people infected.
Furthermore, in many cases increasing the strength  of the controller will cause the total number of recovered individuals to increase.
This phenomenon is illustrated in the plot at the bottom of Figure \ref{fig:control_no_vaccine} by the fact that as the x-axis increases (recall that the $\eta$ value corresponds to the strength of the controller), so does $\bar{r}^*$ 
(except for very small $\gamma$ values).
This behavior is the consequence of at least two reasons: 1) the controller does not prevent infections from occurring inside the sub-populations (i.e. no lock downs), and 2) the controller in some cases `flattens the curve' which prolongs the outbreak and increases the integral under the curve (i.e. a higher number of total infections).

On the other hand, a reduction in the peak infected population 
reduces 
strain on healthcare systems, improving medical outcomes and decreasing fatality rates~\cite{kenyon2020flattening}, 
%
and creates an opportunity for a vaccine to be more impactful.
In Figure~\ref{fig:control_vaccine} we simulate  the network with the same initial conditions and parameters as Figure~\ref{fig:control_no_vaccine}, adding the distribution of a vaccine, starting at time step $k=500$, which, for each sub-population, moves $0.1 \%$ percent of the susceptible proportion directly to the recovered proportion at each time step. In the bottom of Figure~\ref{fig:control_vaccine}, we show the mean infected proportion of the system for increasing levels of sensitivity as well as the total proportion of the recovered population who were infected by the disease~{($\delta \sum_{k\geq 0} \bar{x}^k$)}.
We see that a combined strategy of restricting flow and vaccine distribution can have a marked effect on the number of individuals infected while simultaneously reducing the peak infection level of the system.

Lastly, we implement the controller in \eqref{eq:control} with $\theta^k$ being strictly binary, that is, $\theta^k$ is either 0 or 1.
In Figure~\ref{fig:binary_controller_no_vaccine}, control strategies $\eta = 1,2$ shut down all travel after $50$ time steps, imitating a delay in decision making from policymakers while strategies $\eta = 3,4$ shut down travel immediately when the infection is first detected in any sub-population. 
A critical note is that even when travel is eventually completely closed, if it does not happen quickly enough then there will be almost no discernible impact on infection levels, as seen by comparing control strategies $\eta = 1,2$. 
These extreme cases illustrate that unless the infection is completely eradicated ($\bar{x}^k=0$) prior to reopening travel, it will always spread throughout the network after travel is reopened. When we tested the binary controller case $\eta = 3$ reopening travel after $\bar{x}^k<10^{-9}$, there was a second $\bar{x}^k$ peak around $k=7000$.
When a vaccine is distributed, this delayed second wave can be significantly mitigated as shown in Figure~\ref{fig:binary_controller_vaccine}. 
While costly, completely closing all travel between sub-populations can enable a vaccine to have a tremendous impact but only if
the initial response is not delayed.

\section{Conclusion} \label{sec:conclusion}

In this paper, we have constructed a networked discrete-time SEIR epidemic model that incorporates population flows, 
presented conditions under which the model is well-defined, and shown asymptotic convergence to the healthy states, the set of
equilibria. Additionally, we have proposed a control policy for restricting population flow  which can be interpreted as implementing travel restrictions/bans,
showed it is well defined, and illustrated its
behavior 
via simulation. We have found 
that only restricting the flow of the population is typically insufficient to reduce 
the total number of infections over the course of an epidemic. 
More severe restrictions on the population flow can decrease the peak infection level, which can alleviate stress on healthcare facilities. Further, applying population flow restrictions
together with
a vaccination strategy can significantly reduce the total number of infections.

For future work we plan to incorporate the possibility of
infections occurring while individuals are traveling (i.e., infections occurring on the edges of the graph) as well as using real travel and infection data from the COVID-19 pandemic to learn the model parameters.
Finally, note that
our 
model
does not capture asymptomatic transmission of the virus, 
a key component of 
the infectious behavior of COVID-19, therefore, developing a similar SAIR formulation 
remains open to future work.

\normalem
\bibliographystyle{IEEEtran}
\bibliography{reference}

%








\end{document}